\newtheorem{theorem}{Theorem}[section]
\newtheorem{proposition}{Proposition}[section]
\theoremstyle{definition}
\newtheorem{definition}{Definition}[section]
\theoremstyle{theorem}
\newtheorem{corr}{Corrollary}[section]
\title{\Huge Essentially Ergodic Behaviour  \\ \huge
%On the time and phase space average of the Boltzmann equilibrium
}
  \author{\large Paula Reichert\footnote{
 \textit{Mathematisches Institut, Ludwig-Maximilians-Universit\"at M\"unchen, reichert@math.lmu.de}} }
\date{January 17, 2020\\
\normalsize (forthcoming in the \textit{British Journal for the Philosophy of Science})}
\begin{document}

\doublespacing

\maketitle

\begin{abstract}
\normalsize\noindent I prove a theorem on the precise connection of the time and phase space average of the Boltzmann equilibrium showing that the behaviour of a dynamical system with a stationary measure and a dominant equilibrium state is qualitatively ergodic. 
Explicitly, I will show that, given a measure-preserving dynamical system and a region of overwhelming phase space measure, almost all trajectories spend almost all of their time in that region. The other way round, given that almost all trajectories spend almost all of their time in a certain region, that region is of overwhelming phase space measure. In total, the time and phase space average of the equilibrium state approximately coincide. Consequently, equilibrium can equivalently be defined in terms of the time or the phase space average. Even more, since the two averages are almost equal, the behaviour of the system is essentially ergodic. While this does not explain the approach to equilibrium, it provides a means to estimate the fluctuation rates. %Hence, one doesn't have to appeal to ergodicity in discussions on the foundations of statistical mechanics. In particular, definitions of equilibrium in terms of the phase space average or the time average turn out to be equivalent. What regards the explanation of thermodynamic behavior, the following is due: while this result doesn't help to explain the evolution from non-equilibrium to equilibrium, it provides an estimate on the length of the Poincar\'e cycles, respectively, on the rate of fluctuations out of equilibrium.  
\end{abstract}

\newpage

\section{Introduction}

During the last years, Boltzmann's notion of equilibrium and his explanation of the second law of thermodynamics have been discussed controversially within the philosophical literature. See, for instance, (Lavis [2005], [2008], [2011] and (Frigg and Werndl [2011], [2012a], [2012b], [2015a], [2015b], [2017]). One source of controversy is the \textit{prima facie} discrepancy between Boltzmann's notion of equilibrium, according to which equilibrium corresponds to a macrostate of by far largest phase space measure, and a notion of equilibrium closely connected to thermodynamics\footnote{In standard textbooks on thermodynamics, equilibrium is defined to be a state an isolated system keeps for all times. See, for instance, (Callen [1960]) or (Reiss [1996]).}, according to which equilibrium refers to a state in which an isolated system spends almost all of its time.%\footnote{Equilibrium, in thermodynamics, is essentially the state into which an isolated system evolves quickly (if it has not started there anyway) and in which it then stays forever. In this paper, I don't want to distinguish between the terms ``thermodynamic'' and ``thermodynamic-like'', a distinction which has been made by ... to discern the Clausius notion of equilibrium according to which equilibrium is kept for all times (once it has been attained) from a notion which allows for rare fluctuations.}  

What I  will show, in this paper, is that any controversy based on this distinction is unfounded because the two definitions of equilibrium, on the one hand with respect to the phase space average, on the other with respect to the time average, are equivalent. They are equivalent in the sense that the time and phase space average of the equilibrium state approximately coincide. %In other words, if a state is dominant with respect to the phase space average, it is dominant with respect to the time average and vice versa. %In other words, if a state is an equilibrium state according to Boltzmann, it is an equilibrium state according to thermodynamics and vice versa. 

Explicitly, what I will show is that, given a dynamical system with a stationary measure and a macro-region of (by far) largest phase space measure, almost all trajectories spend most (almost all) of their time in that region. The other way round, given a region in which almost all trajectories spend most (almost all) of their time, that region is of (by far) largest phase space measure (where the last statement has been shown, differently, by Frigg and Werndl ([2015a], [2015b]).) Noteworthy this result follows from stationarity alone. %Hence, it is essentially equivalent to define equilibrium to be the state realized by the overwhelming number of microstates (in terms of the phase space volume) and the state in which the system spends almost all of its time.

Often ergodicity, or some variant of it, like epsilon-ergodicity (Vranas [1998]; Frigg and Werndl [2011], [2012a]), is introduced to bridge the gap between the time and phase space average of the equilibrium state. What I show is precisely that this appeal to ergodicity is unnecessary because every dynamical system with a stationary measure (that is, in particular, every Hamiltonian system) behaves qualitatively ergodically with respect to the equilibrium state. While, for an ergodic system, the time and phase space average {exactly} coincide, we obtain an {`almost equality'} of the time and phase space average of the equilibrium state. As a consequence, the long-time behaviour of the system is {essentially ergodic}.

I don't claim that this close connection between the time and phase space average of the equilibrium state is something which had to be discovered today. Boltzmann seems to have been well aware of it. This is shown by his estimate on the length of the Poincar\'e cycles (that is, the time a system wanders around phase space before it revisits a tiny region it has started from, see (Boltzmann [1896b]). 
Also, it was certainly clear to everyone who worked in the mathematical field of ergodic theory, like Birkhoff. The reason for why it has never been presented before is presumably twofold. For a mathematician like Birkhoff it is not an interesting result because it follows easily from his theorem (which is indeed hard to prove). Boltzmann, on the other hand, seems to have taken it for granted, once he understood that the equilibrium region is of overwhelming phase space measure, that a typical trajectory would spend almost all of its time in that region, and he didn't need a mathematical theorem to make this more precise. 

In what follows, let me outline the paper. Before I prove the main theorem of the paper, a theorem on the time average of the Boltzmann equilibrium, I will introduce the mathematical setting and, based on that, state the result. This will constitute section 2. %I will analyze the explanatory value of the theorem, discussing, in particular, the role of the infinite-time limit. I will emphasize that any statement involving an infinite-time limit is of no help in explaining the evolution from non-equilibrium to equilibrium, which is an evolution in short times. At best, I argue, it may provide an estimate of the rate of fluctuations out of equilibrium which are exhibited as part of the long-time behavior of the system. As such, it may be an estimate of the length of the Poincar\'e recurrence times. 
In section 3, I will introduce Boltzmann's notion of equilibrium. %According to Boltzmann, equilibrium is a macrostate, i.e., a state determined by macroscopic physical variables (like volume $V$, temperature $T$, and so on), corresponding to a region in phase space of overwhelming measure. %As mentioned above, the existence of such a state of overwhelming phase space volume is discussed controversially within the philosophical literature. What I want to point out is that all these difficulties are based on a certain misconception involving a misidentification of macrostates. I will analyze this misconception and explain how Boltzmann's combinatorial argument, when applied correctly, really shows that, for a realistic  physical system, any reasonable partition into macrostates features a region of overwhelming phase space measure, the so-called equilibrium region.  
Having prepared the grounds for conceiving the result, I will present (section 4.1) and prove (section 4.2) the main theorem of the paper. I will show that, given a region of overwhelming phase space measure and a stationary measure, almost all trajectories spend almost all of their time in that region. In section 4.3, I will prove the converse statement: if almost all trajectories spend almost all of their time in a certain region, that region is of overwhelming phase space measure. Section 5 will finally provide a discussion of the implications and the explanatory value of the results.  %As I will point out, the latter direction of the proof does not help to explain anything. This is due to the fact that there is no sensible definition of equilibrium connected to the time average. As I present the results, I will analyze there meaning and enlighten the implications.In a final discussion, in Section 5, I will sum up the importance and meaning of the result.

%As a conclusion, I want to recapture the assumptions that enter the theorem. I emphasize that these are natural assumptions. As a consequence, the result is very general. 

\section{Setup and result}

Let me introduce the mathematical setup.
Let ($\Gamma, \mathcal{B}(\Gamma), \mu$) be a probability space (i.e., $\Gamma$ is a set (what we later call phase space), $\mathcal{B}(\Gamma)$ is the Borel algebra of $\Gamma$ (the set of measurable subsets) and $\mu$ is a probability measure on $\Gamma$, i.e., in particular, $\mu(\Gamma)=1$). Let $T$ be a measure-preserving transformation, that is, for every $A\in\mathcal{B}(\Gamma): \mu(T^{-1}A)=\mu(A)$. This is the same as demanding $\mu$ to be stationary ($T$ is measure-preserving if and only if $\mu$ is a stationary measure). We can now define the time and phase space average of a set $A\in\mathcal{B}(\Gamma)$. 

\begin{definition}[Phase space average]
Let ($\Gamma, \mathcal{B}(\Gamma), \mu$) be a probability space. Let $A\in\mathcal{B}(\Gamma)$. We call 
\begin{equation} \mu(A) = \int_\Gamma \chi_A(x) d\mu(x)\end{equation}
the {`phase space measure'} or {`phase space average'} of $A$.
\end{definition}
Here $\chi_A$ is the characteristic function which attains the value 1 if $x\in A$ and 0 otherwise.

\begin{definition}[Time average]
Let ($\Gamma, \mathcal{B}(\Gamma), \mu$) be a probability space and $T$ a measure-preserving transformation. Let $A\in\mathcal{B}(\Gamma)$. We call
\begin{equation}\hat{A}(x)=  \lim_{\mathcal{T}\to \infty} \frac{1}{\mathcal{T}} \int_{0}^{\mathcal{T}} \chi_{A}(T^tx)dt\end{equation} 
the {`time average'} of $A$ for some $x\in\Gamma$. \end{definition}

I will say more about this limit later. For now it suffices to say that the limit exists pointwise and the limit function is integrable almost everywhere on $\Gamma$.\footnote{For a proof, see (Birkhoff [1931]).} 

While $\mu(A)$ determines the normalized size, or volume, of the region $A$ in phase space, the time average $\hat{A}(x)$ determines the fraction of time (in the infinite-time limit) the trajectory $T^tx$ starting at $x$ spends in the set $A$.

What I will prove is that, for every dynamical system with a measure-preserving transformation, respectively a stationary measure, and a state of overwhelming phase space measure, that is, a phase space average close to one, almost all trajectories spend almost all of their time in that state, that is, its time average is close to one, too. The other way round, given a state in which almost all trajectories spend almost all of their time, that state is of overwhelming phase space measure. Let us, in what follows, refer to states of phase space or time average close to one as `equilibrium states' with respect to the phase space or time average, respectively. What I will show is that a state, which is an equilibrium state with respect to the phase space average, is an equilibrium state with respect to the time average and vice versa. %In other words, the time and phase space average of the equilibrium state approximately coincide for typical initial conditions.

Note that the only two assumptions which enter the proof are a) that the measure is stationary and b) that there is a state of overwhelming phase space measure, respectively (for the reverse direction) b)* that there is a state in which almost all trajectories spend almost all of their time. 
Ergodicity doesn't enter the proof, nor do we get ergodicity out of it. However, we get something very similar. While, for an ergodic system, the time and phase space average {exactly} coincide, we obtain an `almost equality' of the time and phase space average for typical initial conditions. Although this result is weaker than ergodicity, it predicts qualitatively the same long-time behaviour of the system.

%At this point, be aware of the limit in the definition of the time average. The proven almost equivalence of the time and phase space average of the equilibrium state involves the limit $t\to\infty$ in the definition of the time average (just like it is part of the definition of ergodicity). This means, in particular, that the result doesn't tell us anything about the behavior of the system on small time scales, like, for example, on those time scales which are relevant to the system's evolution from non-equilibrium to equilibrium (precisely just like ergodicity doesn't tell us anything about those time scales). At best, the result gives us an idea about the long-time behavior of the system.\footnote{Cf. Bricmont (1995) and Goldstein (2001) for a similar discussion of ergodicity and the relevant time scales.} As such we may take it to estimate the rate of fluctuations out of equilibrium, respectively, the length of the Poincar\'e cycles. It also helps to justify Boltzmann's assumption of ergodicity when estimating the recurrence times (1896b). 

\section{The Boltzmann equilibrium}

When Ludwig Boltzmann connected the phenomenological (or macro) theory of heat known as thermodynamics to the underlying atomistic (or micro) theory of matter, it was part of his enterprise to account not only for the laws, but also for the concepts of thermodynamics, like the concept of equilibrium. 

According to the standard textbook on thermodynamics, equilibrium is a state in which an isolated system (given that it is there) stays for all times. Respectively, a system is said to be in thermodynamic equilibrium if and only if the macroscopic properties don't change.\footnote{For this definition, see, for instance, (Callen [1960]; Reiss [1996]).} Often it is added that, if the system is not in equilibrium in the beginning, it will evolve into equilibrium very quickly (and then stay there for the rest of its time). All together this is referred to as thermodynamic behaviour. 

While I don't want to present Boltzmann's full account of thermodynamic behaviour, I want to introduce his notion of equilibrium. 
According to Boltzmann, equilibrium refers to a region of by far largest phase space volume, where phase space $\Gamma$ is partitioned into regions of different size by some physical macrovariable or some set of physical macrovariables (thermodynamical variables, like volume $V$, temperature $T$, and so on).\footnote{See (Boltzmann [1896a], [1896b], [1897]). See also (Lebowitz [1993]; Goldstein [2001]; Lazarovici and Reichert [2015]) for a thorough presentation of Boltzmann's account.} Explicitly, every microstate $X$, which is represented by a point on $\Gamma$, determines a certain macrostate $M(X)$, represented by an entire region $\Gamma_M\subset \Gamma$ -- the set of all microstates realizing (respectively, {coarse-graining} to) the given macrostate. 
While a microstate is a particular micro configuration consisting of the exact positions and velocities of all the particles (atoms or molecules), $X=(\textbf{q}_1, ..., \textbf{q}_N, \textbf{p}_1, ..., \textbf{p}_N$), a macrostate $M(X)$ is determined by certain fix values of the macroscopic (thermodynamic) variables of the system, such that two different macrostates are macroscopically distinct. By construction, every macrostate $M(X)$ is realized by a great number of microstates, the set of points constituting $\Gamma_M\subset\Gamma$.

According to Boltzmann, the equilibrium state corresponds to a region $\Gamma_{Eq}\subset \Gamma$ of overwhelming phase space measure.
As such, the following definition is appropriate:
\begin{definition}
[Boltzmann equilibrium] Let $(\Gamma, \mathcal{B}(\Gamma), \mu)$ be a probability space, i.e., in particular, $\mu(\Gamma)=1$. Let $\Gamma$ be partitioned into disjoint, measurable subsets $\Gamma_{M_i} (i=1,...,n)$ by some (set of) physical macrovariable(s), i.e., $\Gamma =\bigcup_{i=1}^n \Gamma_{M_i}$. A set $\Gamma_{Eq}\in\{\Gamma_{M_1}, ..., \Gamma_{M_n}\}$ with phase space average 
\begin{equation}\mu(\Gamma_{Eq})=1-\varepsilon\end{equation} 
where $\varepsilon\in\mathbb{R}$ and $ 0<\varepsilon<<1$ is called the {`equilibrium set'} or {`equilibrium region'}. The corresponding macrostate is called the {`equilibrium state'} of the system.
\end{definition}

Be aware that this definition presumes the existence of a macro-partition. That is, phase space $\Gamma$ is partitioned into regions of different size by some set of physical macrovariables, with different macro-regions corresponding to {macroscopically distinct} states (what we call macrostates). Consequently, it is not an arbitrary value of $\varepsilon$ which, when given, determines an equilibrium region -- such a definition is meaningless from the point of physics. Instead, it is a particular macro-partition, a partition with respect to the physical macrovariables of the theory, which is given and it is with respect to that partition that a region of overwhelming phase space measure, if it exists, defines an equilibrium state in Boltzmann's sense (and, by the way, determines the value of $\varepsilon$). 

In this context, it has been Boltzmann's crucial insight that, for a realistic physical system of about $10^{24}$ particles (where a reasonable number is given by Avogadro's constant) and a partition into {macroscopically distinct} states, there always exists a region of overwhelming phase space measure. This is due to the gap between micro and macro description of the system and the fact that, for a realistic number of particles ($N\approx 10^{24}$), small differences at the macroscopic level translate, at the microscopic level, into huge differences in the phase space volumes, with a proportion of the size of non-equilibrium regions as compared to the equilibrium region not of $1:100$ or $1:1000$, but of the order $1: 10^N$, that is, with $N\approx 10^{24}$, of the order $1: 10^{10^{24}}$ or $1: 10^{1000000000000000000000000}.$\footnote{Be aware that this number, when written explicitly, entirely exceeds the pages of this paper. 
Here the huge difference in phase space volumes is essentially due to the fact that differences in the particular micro distribution of up to $\pm \sqrt{N}$ particles are indistinguishable from a macroscopic point of view, while the set of all  micro configurations which differ from the uniform distribution by up to $\pm \sqrt{N}$ particles constitute the vast majority of all possible configurations (where the latter is a basic result of probability theory; see, for instance, (D\"urr \textit{et al.} [2017]) and the discussion therein).} 

It is this dominance of the equilibrium state which has been noticed and stressed by Boltzmann (for instance, in ([1896b])) and emphasized again later, among others, by Feynman ([1965]), Penrose ([1989], [2004]), Lebowitz ([1993]), and Goldstein ([2001]). Penrose ([1989]) gives a detailed and comprehensive account of the Boltzmannian framework at the example of the gas in a box. There he also computes the volume of the non-equilibrium regions as compared to the volume of the equilibrium region to be about $10^{-N}$ with $N$ being the number of particles involved (that way determining the value of $\varepsilon$). He points out that Boltzmann's framework is general enough to apply to any realistic, macroscopic physical system. Accordingly, there naturally exists a partition into macro-regions of vastly different size with one (equilibrium) region of overwhelming measure. 

In recent philosophical papers, some authors have formulated doubts about the existence of such a state of overwhelming phase space measure (Lavis [2005], [2008], [2011]; Frigg and Werndl [2017]). They claim that while there may be a state of largest measure (what {they} then call the equilibrium state), this need not be a state of overwhelming measure -- it must not even be larger than all other regions together. Of course, this leads to a number of subsequent problems. I think, however, that these doubts are unwarranted. One reason is that, in the examples the authors give, they discern `macrostates' that are \textit{de facto} indistinguishable from a macroscopic point of view (and which as such determine `macro-regions' which differ not too much in size, while macroscopically distinct states determine regions of vastly different volume).\footnote{Of course, the notion of being `macroscopically distinct' is not sharply defined and it need not be in order to see that macroscopically distinct states determine regions in phase space that differ vastly in size. Compare the last footnote and the reference therein.} Hence, it is one important point that the given `macro-partitions' are not relevant to physics, to say the least.

It shall, however, not be the aim of this paper to discuss these matters, which have already been addressed  elsewhere (see, for instance, (Lazarovici and Reichert [2015]) or, more recently, (Lazarovici [2018]) where the above-mentioned objection is clarified and discussed in detail). 
In what follows, let me therefore assume that the value of $\varepsilon$ which has been proposed by Penrose and which is in accord with Boltzmann's reasoning, $\varepsilon\approx 10^{-N}$ with $N\approx 10^{24}$, provides just the right oder of magnitude relevant to physics.\footnote{Be aware that the theorem I am going to prove in the next section holds for any value of $\varepsilon$ between zero and one. Note, in addition, that a sensible notion of typicality can already be established for values of $\varepsilon$ much larger than $10^{-24}$. All this notwithstanding, it is this value of $\varepsilon$ ($\varepsilon\approx 10^{-24}$) which is provided by the analysis of realistic, normal-sized physical systems. Hence, when we later evaluate the theorem, we assume that this is the value  of $\varepsilon$ which determines the assertions relevant to physics.}

\section{The time average of the Boltzmann equilibrium}

In this section, which is rather technical, I will prove the main theorem of this paper, a theorem on the time average of the Boltzmann equilibrium. %providing an upper bound on the measure of the set of trajectories with a small time average.

\subsection{A theorem on the time average}

Consider a macro-partition and within that partition a state of overwhelming phase space measure, i.e., an equilibrium state in the sense of Boltzmann. Thus there is an equilibrium region $\Gamma_{Eq}$ with measure $\mu(\Gamma_{Eq})=1-\varepsilon$ where $0<\varepsilon<<1$ (where the precise value of $\varepsilon$ depends on the particular model and macro-partition, but where a reasonable value of $\varepsilon$ is taken to be of the order $10^{-N}$ with $N\approx 10^{24}$). In fact, the theorem below holds for any value of $\varepsilon$ between zero and one, but the result gets more pronounced the smaller $\varepsilon$. 

In what follows, I will show that the set of trajectories which spend a fraction of less than $1-\sqrt{\varepsilon}$ of their time in the equilibrium region $\Gamma_{Eq}$ is very small (of measure less than $\sqrt{\varepsilon}$). Respectively, the other way round, the set of trajectories which spend at least $1-\sqrt{\varepsilon}$ of their time in equilibrium is very large (of measure greater than $1-\sqrt{\varepsilon}$). Moreover, the smaller the mean time average of a set of trajectories, the smaller the bound on the measure of the respective set. 

Let, again, ($\Gamma, \mathcal{B}(\Gamma), \mu$) be a probability space, i.e., in particular, $\mu(\Gamma)=1$. Let, in addition, $T$ be a measure-preserving transformation, i.e., for every $A\in\mathcal{B}(\Gamma): \mu(T^{-1}A)=\mu(A)$. 
In what follows, we will consider the set $Z$ of points $x\in\Gamma$ for which the time average of equilibrium is smaller than $1-k\varepsilon$: $Z=\{x\in\Gamma|\hat{\Gamma}_{Eq}(x)<1-k\varepsilon\}$ with $1\le k\le1/\varepsilon$. These points determine trajectories which spend a fraction of less than $1-k\varepsilon$ of their time in equilibrium. We will then compare it to the set $X$ of points which spend a fraction of at least $1-k\varepsilon$ of their time in equilibrium: $X=\{x\in\Gamma|\hat{\Gamma}_{Eq}(x)\ge1-k\varepsilon\}$. With respect to these two sets we can state the following theorem and corollary.

\begin{theorem}[Time average of $\Gamma_{Eq}$]
Let ($\Gamma, \mathcal{B}(\Gamma), \mu$) be a probability space and let $T$ be a measure-preserving transformation. Let $\varepsilon, k\in\mathbb{R}$ with $0<\varepsilon<<1$ and $1\le k\le 1/\varepsilon$. Let $\Gamma_{Eq}\subset\Gamma$ be an equilibrium region, i.e., in particular $\mu(\Gamma_{Eq})=1-\varepsilon$. 
Let $Z$ be the set of points for which the time average of the equilibrium region is smaller than $1-k\varepsilon$: $Z=\{x\in\Gamma|\hat{\Gamma}_{Eq}(x)<1-k\varepsilon\}$. 
It follows that $Z$ is of measure
\begin{equation} \mu(Z) <1/k. \end{equation}

\end{theorem}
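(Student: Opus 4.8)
The plan is to use the Birkhoff ergodic theorem to control the integral of the time-average function $\hat{\Gamma}_{Eq}$ and then apply a Markov-type inequality to the complementary "bad" set. First I would recall that, by Birkhoff's theorem (cited in the excerpt), the limit defining $\hat{\Gamma}_{Eq}(x)$ exists for $\mu$-almost every $x$, the limit function is integrable, and crucially its integral equals the phase space average: $\int_\Gamma \hat{\Gamma}_{Eq}(x)\, d\mu(x) = \mu(\Gamma_{Eq}) = 1-\varepsilon$. (This last identity is itself a consequence of stationarity together with dominated convergence, and it is the only place the measure-preserving hypothesis is used.)

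Next I would rewrite things in terms of the "time spent outside equilibrium." Set $g(x) := 1 - \hat{\Gamma}_{Eq}(x)$, which is nonnegative almost everywhere and satisfies $\int_\Gamma g\, d\mu = \varepsilon$. The set $Z$ is exactly $Z = \{x \in \Gamma \mid g(x) > k\varepsilon\}$. Now apply Markov's inequality to the nonnegative integrable function $g$:
\begin{equation}
\mu(Z) = \mu\bigl(\{g > k\varepsilon\}\bigr) \le \frac{1}{k\varepsilon}\int_\Gamma g\, d\mu = \frac{\varepsilon}{k\varepsilon} = \frac{1}{k}.
\end{equation}
To get the strict inequality $\mu(Z) < 1/k$ stated in the theorem, I would note that equality in Markov's inequality forces $g$ to be supported (up to null sets) on $\{g = k\varepsilon\}$ with the rest of the mass at zero; since $\hat{\Gamma}_{Eq}$ takes values in $[0,1]$ and $\Gamma_{Eq}$ has positive measure less than one, a short argument rules this out for $0 < \varepsilon \ll 1$ and $k \ge 1$, or alternatively one simply observes that the definition of $Z$ uses a strict inequality "$<1-k\varepsilon$" so a standard approximation (replacing $k\varepsilon$ by $k\varepsilon - \delta$ and letting $\delta \to 0$) yields the strict bound.

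The main obstacle is not any hard estimate but making sure the identity $\int_\Gamma \hat{\Gamma}_{Eq}\, d\mu = \mu(\Gamma_{Eq})$ is properly justified from stationarity alone, without invoking ergodicity: one writes $\frac{1}{\mathcal{T}}\int_0^{\mathcal{T}} \chi_{\Gamma_{Eq}}(T^t x)\, dt$, integrates over $\Gamma$, uses Fubini and the measure-preserving property to see each inner integral equals $\mu(\Gamma_{Eq})$, and then passes to the limit using that the integrands are uniformly bounded by $1$. Everything else is the one-line Markov inequality above. I would also remark that the corollary (and the "$1-\sqrt{\varepsilon}$" statements in the surrounding text) follow by specializing $k = 1/\sqrt{\varepsilon}$.
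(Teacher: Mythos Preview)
Your proof is correct and follows essentially the same route as the paper: both first establish the identity $\int_\Gamma \hat{\Gamma}_{Eq}\,d\mu = 1-\varepsilon$ from stationarity via Birkhoff and Fubini/dominated convergence, and then bound $\mu(Z)$ by what is exactly Markov's inequality. The only cosmetic difference is that you name Markov's inequality explicitly (working with $g=1-\hat{\Gamma}_{Eq}$), whereas the paper re-derives the same bound by hand through the decomposition $1-\varepsilon = \bar{\Gamma}_{Eq}(X)\mu(X) + \bar{\Gamma}_{Eq}(Z)\mu(Z)$ together with $\bar{\Gamma}_{Eq}(X)\le 1$ and $\bar{\Gamma}_{Eq}(Z)<1-k\varepsilon$; your handling of the strict inequality is slightly informal, but the clean way (which the paper effectively uses) is to note that if $\mu(Z)>0$ then $\int_Z g\,d\mu > k\varepsilon\,\mu(Z)$ strictly, while $\mu(Z)=0$ is trivial.
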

 
From this you get a bound on the phase space measure of the set $X$: 

\begin{corr}[]
Let everything be as in the above theorem. Let $X$ be the set of points for which the time average of the equilibrium region is larger than or equal to $1-k\varepsilon$: $X=\{x\in\Gamma|\hat{\Gamma}_{Eq}(x)\ge1-k\varepsilon\}$. Then $X$ is of measure \begin{equation} \mu(X)> 1- 1/k.\end{equation}

\end{corr} 
 
I will prove this theorem below (in section 4.2).
In fact, this theorem and corollary hold for any value of $\varepsilon$ and $k$ with $1\le k\le 1/\varepsilon$. We are, however, particularly interested in the case in which $\varepsilon$ is very small (of the order $10^{-N}$ with $N\approx 10^{24}$). In that case, we can choose $k$ within the given bounds $1\le k\le 1/\varepsilon$ large enough for $\mu(Z)$ to be close to zero and $\mu(X)$ to be close to one. A convenient choice of $k$ is the following: \[k=1/\sqrt{\varepsilon}.\] 
In that case, we distinguish between the `good' set $X$ of trajectories which spend at least $1-\sqrt{\varepsilon}$ of their time in equilibrium and the `bad' set $Z$ of trajectories which spend less than $1-\sqrt{\varepsilon}$ of their time in equilibrium. From the above theorem and corollary we obtain that the phase space measure of the `bad' set $Z$ is less than $ \sqrt{\varepsilon}$ whereas the phase space measure of the `good' set $X$ is larger than $1-\sqrt{\varepsilon}$:
\[ \mu(Z)<\sqrt{\varepsilon} \hspace{1cm} \mathrm{and} \hspace{1cm}Ê\mu(X)>1-\sqrt{\varepsilon}.\]

Inserting the value of $\varepsilon$ from above, $\varepsilon\approx 10^{-10^{24}}$, we can restate the result in its   relevant form. In that case, the equilibrium region $\Gamma_{Eq}$ is of measure $\mu(\Gamma_{Eq})=1-10^{-10^{24}}$. Moreover, from $\varepsilon\approx 10^{-10^{24}}$ it follows that $\sqrt{\varepsilon}\approx 10^{-10^{23}}$. Thus we obtain that the phase space measure of the set $Z$ of trajectories which spend less than $1-10^{-10^{23}}$ of their time in equilibrium is vanishingly small (close to zero), whereas the phase space measure of the set $X$ of trajectories which spend at least $1-10^{-10^{23}}$ of their time in equilibrium is overwhelmingly large (close to one). To be precise: 
\[ \mu(Z)<10^{-10^{23}} \hspace{1cm} \mathrm{and} \hspace{1cm}Ê\mu(X)>1-10^{-10^{23}}.\]

Using the notion of typicality as presented, e.g., in D\"urr et al. (2017), we thus find that trajectories which spend almost all of their time in equilibrium are {typical} whereas trajectories which spend less time in equilibrium are {atypical}!

Note that to derive this result, only two assumptions are needed, one of which we discussed above (in section 3):

\begin{itemize}

\item {stationarity:} the phase space measure $\mu$ on $\Gamma$ is stationary, i.e., for any measurable set $A\subset\Gamma$: $\mu(A)=\mu(T^{-1}A$), and
\item {dominance of the equilibrium state:} the equilibrium state determines a region $\Gamma_{eq}$ of by far largest phase space measure, i.e., $\mu(\Gamma_{Eq})=1-\varepsilon$ with $0<\varepsilon<<1$.

\end{itemize}
We take it that both are natural conditions in case we consider a realistic, macroscopic physical system. While the second condition has been discussed in section 3, the first condition, stationarity of the measure, is a weak assumption about the dynamics. It is weak in the sense that every Hamiltonian system comes equipped with a stationary measure, the Liouville measure. Based on that, there exist a couple of stationary measures, like the microcanonical measure, the canonical measure, and so on, all of them frequently used in statistical mechanics. In short, stationarity is provided by the dynamics.

\subsection{Proof of the theorem on the time average}

Let me now prove the theorem on the time average of $\Gamma_{Eq}$.

\begin{proof}[Proof (Theorem 4.1)]
The transformation $T$ is measure-preserving, that is, for any set $A\in \mathcal{B}(\Gamma)$ and $\forall t$: $\mu(A)=\mu(T^{-t}A)$. Hence, in particular, $\mu(\Gamma_{Eq})=\mu(T^{-t}\Gamma_{Eq})$ where $\Gamma_{eq}$ refers to the equilibrium state, i.e., $\mu(\Gamma_{Eq})=1-\varepsilon$. It follows that $\mu(T^{-t}\Gamma_{Eq})=1-\varepsilon$, as well, and, hence,
\begin{eqnarray}1-\varepsilon &=&\mu(\Gamma_{Eq})= \mu(T^{-t}\Gamma_{Eq}) = \int_\Gamma \chi_{T^{-t}\Gamma_{Eq}}(x)d\mu(x)=\int_\Gamma \chi_{\Gamma_{Eq}}(T^tx)d\mu(x)\nonumber\\
&=& \lim_{\mathcal{T}\to \infty} \frac{1}{\mathcal{T}} \int_{0}^{\mathcal{T}} dt \int_\Gamma \chi_{\Gamma_{Eq}}(T^tx)d\mu(x).\end{eqnarray}
Here the last equation follows from the fact that the integrand is a constant.

At this point we make use of the theorem of Birkhoff ([1931])\footnote{For a thorough presentation of Birkhoff's theorem, see also (Petersen [1983]).} which says that, for any measure-preserving transformation $T$ 
and for any $\mu$-integrable function $f$, i.e. $f\in L^1(\mu)$, the limit 
\[ \hat{f}=\lim_{\mathcal{T}\to \infty} \frac{1}{\mathcal{T}} \int_{0}^{\mathcal{T}} f(T^tx)dt\]
exists for almost every $x\in\Gamma$ and the (almost everywhere defined) limit function $\hat{f}$ is integrable, i.e., $\hat{f}\in L^1(\mu)$. 

Let us apply Birkhoff's theorem to the above equation. The characteristic function $\chi_{\Gamma_{Eq}}$ is $\mu$-integrable and, hence, for almost all $x\in \Gamma$, the limit $\lim_{\mathcal{T}\to \infty} \frac{1}{\mathcal{T}} \int_{0}^{\mathcal{T}} \chi_{\Gamma_{Eq}}(T^tx)dt$ exists and is $\mu$-integrable. In other words, for almost every single trajectory the time average exists. In that case, we can change the order of integration and, by dominated convergence, pull the limit into the $\mu$-integral. Let $\Gamma^*\subset\Gamma$ with $\mu(\Gamma^*)=\mu(\Gamma)$ be the set of points for which the time average exists. Then equation (6) becomes
\begin{equation} 1-\varepsilon=\int_{\Gamma^*} d\mu(x) \bigg[\lim_{\mathcal{T}\to \infty} \frac{1}{\mathcal{T}} \int_{0}^{\mathcal{T}} \chi_{\Gamma_{Eq}}(T^tx)dt\bigg].\end{equation}

For means of demonstration, let me show how this equation is fulfilled in the two `extreme' cases, when the dynamics is very special: first, the ergodic case, where the trajectory is dense in phase space and, second, the case in which $T^t$ is the identity, where the trajectory is merely one point. All other cases lie in between. 

The first way to fulfill equation (7) is the following: the time average $\hat{\Gamma}_{Eq}(x)$ is a constant (almost everywhere). In that case, it must hold that $\hat{\Gamma}_{Eq}(x)=1-\varepsilon$. The set of all points $x\in \Gamma$ for which the limit exists (and is constantly $1-\varepsilon$), defines an invariant set, $T^{-1}\Gamma^*=\Gamma^*,$ with measure $\mu(\Gamma^*)=1$. This is the ergodic case. 

The second way to fulfill equation (7) is that there exists an invariant set $A$ (i.e. $T^{-1}A=A$) with $\mu(A)=\varepsilon$ such that $\forall x\in A$: $ \hat{\Gamma}_{Eq}(x)=0$ and $\forall x \notin A:  \hat{\Gamma}_{Eq}(x)=1$ (again, up to a set of measure zero). Then also $\Gamma^*\backslash A$ is an invariant set and $\mu(\Gamma^*\backslash A)=1-\varepsilon$. This reflects the case of $T^t$ being the identity, $T^tx=x$, and $\Gamma\backslash A=\Gamma_{Eq}$. 

Let us now analyse the general case. Let again $X$ be the set of all $x\in\Gamma$ which share the property that their time average is at least $1-k\varepsilon$ with $1\le k\le 1/\varepsilon$. That is, 
$X=\{x\in \Gamma| \hat{\Gamma}_{Eq}(x)\ge 1-k\varepsilon\}.$
In contrast, let again $Z$ be the set of all $x\in\Gamma$ for which the time average is smaller than $1-k\varepsilon$: $Z=\{x\in \Gamma| \hat{\Gamma}_{Eq}(x)< 1-k\varepsilon\}.$
It is clear that such a decomposition exists, that is, $\Gamma^*=X\cup Z$ ($\Gamma=X\cup Z$ up to a set of measure zero), $\mu(Z)=\mu(\Gamma\backslash X)$, and $X$ and $Z$ are invariant sets.  
Using the definition of $X$ and $Z$, equation (7) can be rewritten as 
\begin{equation}1-\varepsilon
= \int_{ X} \hat{\Gamma}_{Eq}(x) d\mu(x) + \int_{ Z} \hat{\Gamma}_{Eq}(x) d\mu(x).
\end{equation}

Let now the {`mean time average'} of $X$ be defined as \[\bar{\Gamma}_{eq}(X)= \frac{1}{\mu(X)}\int_X \hat{\Gamma}_{Eq}(x)d\mu(x),\]where $\hat{\Gamma}_{Eq}(x)$ exists and is integrable for all $x\in X$ (this is part of the definition of $X$). The mean time average determines the mean fraction of time the trajectories starting in $X$ spend in the set $\Gamma_{Eq}$. Analogously, let $\bar{\Gamma}_{eq}(Z)$ denote the mean time average of  $Z$. 
Using the definition of the mean time average, equation (8) can be rewritten as
\begin{equation}Ê1-\varepsilon= \bar{\Gamma}_{Eq}(X)\mu(X) + \bar{\Gamma}_{Eq}(Z)\mu(Z). \end{equation}
We want to solve this for $\mu(Z)$. Note that $\mu(X)=1-\mu(\Gamma\backslash X)=1-\mu(Z)$. 
Moreover, since $\lim_{T\to \infty} \frac{1}{T} \int_{0}^{T} \chi_{\Gamma_{Eq}}(T^tx)dt\le1$: $\bar{\Gamma}_{Eq}(X)\le1$. %That is, the mean time average of $X$ is maximally one. %From below, $\bar{\Gamma}_{Eq}(X)$ is bounded by $1-k \varepsilon$ (since $\hat{\Gamma}_{Eq}(x)\ge 1-k\varepsilon$ for all $x\in X$). 
On the other hand, it follows from the definition of the mean time average that $\bar{\Gamma}_{Eq}(Z)<1-k\varepsilon$ (since $\hat{\Gamma}_{Eq}(x)<1-k\varepsilon$ for all $x\in Z$). Hence, since $k\ge 1$: $\bar{\Gamma}_{Eq}(Z)<1-\varepsilon$.

Now in order for the right hand side of equation (9) to add up to $1- \varepsilon$, the measure of $Z$ needs to be small. This is due to the fact that $\mu(Z)$ comes with a factor $\bar{\Gamma}_{Eq}(Z)<1-\varepsilon$ which can only be encountered by a factor $\bar{\Gamma}_{Eq}(X)\ge 1-\varepsilon$ in front of $\mu(X)$. However, since $\bar{\Gamma}_{Eq}(X)$ is bounded from above by one, $\bar{\Gamma}_{Eq}(X)\le 1$, the first summand can outweigh the second only if $\mu(X)$ is large enough (respectively, $\mu(Z)$ small enough). At most, $\bar{\Gamma}_{Eq}(X)=1$. In that case, $\mu(X)$ attains its minimum and $\mu(Z)$ its maximum (where $\mu(Z) = 1- \mu(X)$). 
Since we want to determine an upper bound on $\mu(Z)$, we set $\bar{\Gamma}_{Eq}(X)=1$ (a condition we will relax later). Let, in addition, $\Theta:=\mu(Z)$. Then equation (9) can be rewritten as
\begin{equation} 1-\varepsilon = (1-\Theta) + \bar{\Gamma}_{Eq}(Z)\Theta \end{equation} 
With $\bar{\Gamma}_{Eq}(Z)<1 -k\epsilon$, it follows that
\begin{equation} \hspace{0.5cm}\Theta=\frac{\varepsilon}{1-  \bar{\Gamma}_{Eq}(Z)} < \frac{\varepsilon}{1- (1-k\varepsilon)} =1/k. \end{equation}
If we now no longer restrict the mean time average of $X$ to be one, this inequality becomes even more pronounced. That way we obtain an upper bound on $\mu(Z)$:
\[\mu(Z)< 1/k.\] 
This proves the assertion.
\end{proof}

It is now easy to prove {corollary 4.1}:

\begin{proof}[Proof (Corollary 4.1)]
Given the fact that $\mu(Z)<1/k$, it follows that 
\[ \mu(X)=\mu(\Gamma\backslash Z) = 1-\mu(Z)> 1-1/k.\]
\end{proof}

\subsection{The converse statement}

The converse statement follows directly from the almost-everywhere existence and integrability of the time average. It says that if there exists a region $\Gamma_M\subset \Gamma$ such that almost all trajectories (all $x\in X$ with $\mu(X)=1-\delta$) spend almost all of their time in that region ($\forall x\in X$: $\hat{\Gamma}_M(x)\ge 1-\varepsilon$), then this region has large phase space measure: $\mu(\Gamma_{M})\ge (1-\varepsilon)(1-\delta)$. Here $\delta$ and $\varepsilon$ are assumed to be small, $0<\delta<<1$ and $0<\varepsilon<<1$ (this is the case we are interested in, while, in fact, the proof holds for any value of $\delta$ and $\varepsilon$ between 0 and 1). The following proposition can be proven.
\begin{proposition}
Let the setting be as above. Let $\Gamma_M\subset \Gamma$ and $X\subset \Gamma$ with $\mu(X)=1-\delta$ be such that $\forall x \in X$: $\hat{\Gamma}_{M}(x)\ge1-\varepsilon$. 
Then 
\begin{equation} \mu(\Gamma_M) \ge (1-\varepsilon)(1-\delta).\end{equation}
\end{proposition}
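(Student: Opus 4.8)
The plan is to mimic the computation used in the proof of Theorem 4.1, but now running the estimate in the opposite direction. I would start from the Birkhoff-type identity that expresses the phase space measure of $\Gamma_M$ as the $\mu$-average of its time average. Precisely: since $\chi_{\Gamma_M}\in L^1(\mu)$, the limit $\hat{\Gamma}_M(x)$ exists for almost every $x$ and is $\mu$-integrable, and by the same change-of-order-of-integration argument (stationarity plus dominated convergence) used to pass from equation (6) to equation (7), one has
\begin{equation}
\mu(\Gamma_M)=\int_{\Gamma^*}\hat{\Gamma}_M(x)\,d\mu(x),
\end{equation}
where $\Gamma^*$ is the full-measure set on which the time average exists. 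This is the key input, and it is already granted to us by Birkhoff's theorem as invoked earlier in the paper.

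Next I would split the integral over the `good' set $X$ and its complement: $\mu(\Gamma_M)=\int_X\hat{\Gamma}_M\,d\mu+\int_{\Gamma^*\setminus X}\hat{\Gamma}_M\,d\mu$. The second integral is non-negative because $\hat{\Gamma}_M\ge 0$, so it can simply be dropped, giving $\mu(\Gamma_M)\ge\int_X\hat{\Gamma}_M\,d\mu$. On $X$ the hypothesis gives $\hat{\Gamma}_M(x)\ge 1-\varepsilon$ pointwise, so $\int_X\hat{\Gamma}_M\,d\mu\ge(1-\varepsilon)\mu(X)=(1-\varepsilon)(1-\delta)$. Chaining the two inequalities yields $\mu(\Gamma_M)\ge(1-\varepsilon)(1-\delta)$, which is exactly (13).

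There is essentially no hard part here — the proposition is a one-line consequence of monotonicity of the integral once the averaging identity (the analogue of equation (7)) is in place. The only point that needs a little care, and which I would state explicitly, is the justification of that identity: one must note that $X$ (being defined by a condition on the $T$-invariant function $\hat{\Gamma}_M$) and $\Gamma^*$ are invariant up to null sets, and that dominated convergence applies since $0\le\frac{1}{\mathcal{T}}\int_0^{\mathcal{T}}\chi_{\Gamma_M}(T^tx)\,dt\le 1$ uniformly. If one wanted to be fully self-contained and avoid even invoking the averaging identity, an alternative is to use stationarity directly: for each fixed $\mathcal{T}$, Fubini gives $\int_X\frac{1}{\mathcal{T}}\int_0^{\mathcal{T}}\chi_{\Gamma_M}(T^tx)\,dt\,d\mu(x)=\frac{1}{\mathcal{T}}\int_0^{\mathcal{T}}\mu(T^{-t}X\cap\Gamma_M)\,dt\le\mu(\Gamma_M)$ — but since $X$ is invariant this is just $\frac{1}{\mathcal{T}}\int_0^{\mathcal{T}}\mu(X\cap\Gamma_M)\,dt=\mu(X\cap\Gamma_M)\le\mu(\Gamma_M)$, and letting $\mathcal{T}\to\infty$ inside $X$ with dominated convergence recovers $\int_X\hat{\Gamma}_M\,d\mu\le\mu(\Gamma_M)$. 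Either route works; I would present the first, shorter one, since the averaging identity is already established earlier in the paper.
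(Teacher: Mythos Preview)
Your proposal is correct and follows essentially the same route as the paper: invoke the averaging identity $\mu(\Gamma_M)=\int_{\Gamma^*}\hat{\Gamma}_M\,d\mu$ (the analogue of equations (7)--(8)), split the integral over $X$ and its complement, drop the non-negative complement term, and bound the integral over $X$ below by $(1-\varepsilon)\mu(X)=(1-\varepsilon)(1-\delta)$. The paper's proof is precisely this, stated even more tersely.
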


\begin{proof}
When you apply equations (7) and (8) to the set $\Gamma_M\subset\Gamma$, you get 
\[ \mu(\Gamma_{M})= \int_{\Gamma^*} d\mu(x) \bigg[\lim_{\mathcal{T}\to \infty} \frac{1}{\mathcal{T}} \int_{0}^{\mathcal{T}} \chi_{\Gamma_{M}}(T^tx)dt\bigg]=\int_X d\mu(x) \hat{\Gamma}_{M}(x)+ \int_Z d\mu(x)\hat{\Gamma}_{M}(x), \]
where the first equality holds due to Birkhoff's theorem and the second uses the definition of the time average and the fact that $Z=\Gamma^*\backslash X$. From $ \int_Z d\mu(x)\hat{\Gamma}_{M}(x)\ge 0$ and the assumptions it follows that 
\[ \mu(\Gamma_{M})\ge\int_X d\mu(x) \hat{\Gamma}_{M}(x)\ge (1-\varepsilon)(1-\delta). \]
This proves the assertion.
\end{proof}

Let $0<\delta<<1$ and $0<\varepsilon<<1$. Then this result tells us that if there exists a region $\Gamma_M\subset \Gamma$ such that almost all trajectories (all $x\in X$ with $\mu(X)=1-\delta$) spend almost all of their time ($\forall x\in X$: $\hat{\Gamma}_M(x)\ge 1-\varepsilon$) in that region, this region is of overwhelming phase space measure: $\mu(\Gamma_{M})\ge (1-\varepsilon)(1-\delta)$. Frigg and Werndl ([2015a], [2015b]) show qualitatively the same. They do this by means of the ergodic decomposition of the system. We instead obtain the result directly from the almost-everywhere existence and integrability of the time average.

What is the physical meaning of the state $\Gamma_M$?
%Whereas the preceding theorem provided an estimate on the rate of fluctuations out of equilibrium, this proposition -- the converse statement -- is of little or no relevance. This is due to the fact that the notion of $\Gamma_M$ is empty from a physical point of view. To see this, consider the following question. How do we determine $\Gamma_M$? In order to identify a state in which a system spends almost all of its time, we would have to wait infinitely long (which is impossible). Even if we would look for a state that exists, say, since the beginning of the universe -- which is the best we can do -- this is a hopeless task: there is no such state, everything we observe is changing ever since. Is there some other way to determine $\Gamma_M$? Why are we concerned with $\Gamma_M$ at all?
Frigg and Werndl ([2015a], [2015b]) coin the notion of an `$\alpha$-equilibrium' referring to a region in which typical trajectories spend most of their time (to be precise, a fraction of $\alpha>1/2$). This is essentially what we call $\Gamma_M$ (with $\varepsilon<1/2$). According to them, it is the notion of an $\alpha$-equilibrium (where equilibrium is defined in terms of the time average) rather than Boltzmann's notion of equilibrium (where equilibrium is defined in terms of the phase space average) which connects to the thermodynamic notion of equilibrium. 

This is true given that equilibrium in thermodynamics is defined with respect to the long-time behaviour of the system -- equilibrium is the state in which the system stays for all times. The problem, however, is that the notion of an $\alpha$-equilibrium is empty as long as it doesn't refer to a physical state, just like, in the case of Boltzmann, an arbitrary value of $\varepsilon$ doesn't in itself define an equilibrium region.\footnote{From a purely mathematical point of view there exist uncountably many $\alpha$-equilibriae. For example, phase space $\Gamma$ is itself an $\alpha$-equilibrium: all trajectories spend all of their time in $\Gamma$. However, such regions don't \textit{per se} connect to physical states. Also note that there is no way to identify a particular $\alpha$-equilibrium by direct observation since we cannot wait infinitely long (as demanded by the infinite-time limit). Hence, the notion of an $\alpha$-equilibrium is empty in itself and becomes meaningful only in connection with Boltzmann's notion.} 

If we want to recover thermodynamics, we have to introduce a set of macrovariables (thermodynamical variables, like volume $V$, temperature $T$, etc.) in order to, first of all, specify the macroscopic physical state of the system. However, once we specified a macrostate, we are back in the previous, Boltzmannian setting, where there is a supervenience of macrostates over microstates, with every macrostate corresponding to a particular, measurable region of phase space. Only that now, by help of theorem 4.1, we can indeed determine the time average of the equilibrium region, at least for typical trajectories, and check, by that means, whether it is an $\alpha$-equilibrium. %and thus directly connect to the thermodynamic notion of equilibrium accruing to which equilibrium is a state in which an isolated system stays for all times.  

\section{Discussion}
From theorem 4.1 we learned that, given a system with a stationary measure and an equilibrium region of by far largest measure, typical initial data determine trajectories which spend almost all of their time in that region. %That is, typical trajectories spend almost all of their time in the Boltzmann equilibrium. 
To be precise, a set of points of measure larger than $1-\sqrt{\varepsilon}$ determines trajectories that spend a fraction of at least $1-\sqrt{\varepsilon}$ of their time in equilibrium. Respectively, the other way round, the set of trajectories which spend a fraction of less than $1-\sqrt{\varepsilon}$ of their time in equilibrium is of measure less than $\sqrt{\varepsilon}$. Moreover, the smaller the mean time average, the smaller the bound on the respective set.

Now consider again a reasonable value of $\varepsilon$, that is, $\varepsilon \approx10^{-N}$, where $N\approx 10^{24}$ for a realistic, macroscopic physical system. For this value of $\varepsilon$, the set of trajectories which spend almost all of their time in equilibrium ($\ge 1-10^{-10^{23}}$) is of measure close to one ($>1-10^{-10^{23}}$), while the measure of the set of trajectories which spend little time in equilibrium ($<1-10^{-10^{23}}$) is close to zero ($<10^{-10^{23}}$). In other words, the trajectories which spend almost all of their time in equilibrium are {typical}, whereas the trajectories which spend little time in equilibrium are {atypical}.

What does this imply for the individual system?\footnote{For the connection between typical properties, typical sets and typical individuals as well as for the status of typicality explanations in general, see (Lazarovici and Reichert [2015]) or, more recently, (Wilhelm [2019]).} Consider an isolated system at an arbitrary moment of time. What we obtain from the above result is that typically the system is in equilibrium {at that moment} and it has been and will be in equilibrium {for almost all past and future times}, thus exhibiting, essentially, what Lavis ([2005]) calls thermodynamic-like behaviour (where `thermodynamic-like' in contrast to `thermodynamic' allows for (rare) fluctuations out of equilibrium). 

This is precisely how Boltzmann's notion of equilibrium connects to the thermodynamic notion of equilibrium. And this is how we encounter an objection which has been formulated by Frigg and Werndl in ([2015a], [2015b]). They claim  that Boltzmann's notion of equilibrium or, more generally,
\begin{quote}
[Boltzmann's account] faces a serious problem: the absence of a conceptual connection with the thermodynamic (TD) notion of equilibrium [where the] following is a typical TD textbook definition of equilibrium: `A thermodynamic system is in equilibrium when none of its thermodynamic properties are changing with time [. . .]'. (Frigg and Werndl [2015a], p. 12)
\end{quote}
 The above result shows that this statement is simply false. There is not only a conceptual, but even a definite mathematical connection. Given an equilibrium state of overwhelming measure (an equilibrium state \`a la Boltzmann), the typical system spends almost all of its time in that state, thus exhibiting thermodynamic behaviour (non-changing macro properties) up to (rare) fluctuations. 

At this point, it is understood that fluctuations, which, in total, add up to a tiny fraction of time, constitute a small refinement to the thermodynamic notion of equilibrium. 
On observational scale, thermodynamics and Boltzmann coincide. In fact, the above result asserts that the typical system is in equilibrium {basically {all} the time} (a fraction of time $\ge 1-10^{-10^{23}}$). Accordingly, we don't expect to observe {any} fluctuation {at all}, neither on the time scale relevant to experiments, nor on the cosmological scale given by the age of our universe (compare the estimate on the fluctuation rates at the end of this discussion).
%Given that the result bridges the gap between the phase space and the time average, it connects Boltzmann's definition of equilibrium to the thermodynamic definition of equilibrium.

While the given result connects to the thermodynamic notion of equilibrium, it {does not} explain the irreversible (time-asymmetric) part of thermodynamic behaviour which is usually formulated as an amendment to any textbook definition of thermodynamic equilibrium, namely that if the system is {not} in equilibrium in the very beginning, it will evolve into equilibrium rather quickly. This is not shown, and cannot be shown by the above result, first and foremost, due to the fact that there is a $t\to \infty$ limit in the definition of the time average. Thus, just like in case of an ergodic (or epsilon-ergodic) system -- ergodicity (or epsilon-ergodicity) includes the very same $t\to \infty$ limit! --, also in this case we cannot infer anything about the behaviour of the system on small time scales, like the scale relevant to the system's approach to equilibrium. To put it differently, the $t\to \infty$ limit allows anything on small time scales and there is, at this level of description, no argument against, let's say, lengthy fluctuations of several hours (or days or weeks) which still add up to a tiny fraction of time with respect to the infinite-time limit.\footnote{See also (Bricmont [1995]) for a thorough discussion of the notion of ergodicity (which also applies to epsilon-ergodicity) and its failure to contribute to the explanation of the approach to equilibrium within short times. Note that the same arguments apply to epsilon-ergodicity (a concept proposed by Frigg and Werndl in ([2011], [2012a]), which includes the very same infinite-time limit.} 

If we want to explain the rapid evolution from non-equilibrium to equilibrium, we have to return to a refined description of the system's motion on phase space, where there are macro-regions of vastly different size (with one region of overwhelming volume) and there is a trajectory winding around erratically due to the chaotic behaviour of the system. It is essentially this erratic behaviour which, together with the huge size of the equilibrium region, explains the rapid approach to equilibrium via an explanatory scheme used by Boltzmann and known today as the typicality account. Since it is not the aim of this paper to discuss the typicality account, let me, at this point, just refer to the pertinent literature on this topic. See, for instance, (Lebowitz [1993]; Bricmont [1995]; Goldstein [2001]).\footnote{See also (Frigg and Werndl [2012b], [2015a], [2015b]) for a critical discussion of the typicality account and (Lazarovici and Reichert [2015]; Lazarovici [2018]) for responses to that.}

In what follows, take into account the reverse statement, given in proposition 4.1, which asserts that any state in which almost all trajectories spend most (almost all) of their time is of (by far) largest phase space measure. Together theorem 4.1 and proposition 4.1 provide the two directions of proof which show that the time and phase space average of the equilibrium state, be it an equilibrium with respect to the time or the phase space average, respectively, approximately coincide. In other words, if a state is of phase space average close to one, its time average is (typically) close to one and vice versa.
 %That is, if the phase space average of a state is close to one, its time average is close to one for almost all trajectories and the other way round. All in all, this shows that  % (where a state is an equilibrium if its time or phase space average is close to one, respectively). Since the two averages of the equilibrium state are almost equal, it follows that the system is qualitatively ergodic (where, for an ergodic system, they would be exactly equal.) %Respectively, if a state is an equilibrium state with respect to the phase space average, it is an equilibrium state with respect to the time average and vice versa. Moreover, given that the time and phase space average of the equilibrium state are approximately equal, the system behaves essentially ergodic (where, for an ergodic system, the averages would be exactly equal).

To have an idea of the closeness of the two averages, consider again a reasonable value of $\varepsilon$, like, e.g., $\varepsilon= 10^{-10^{24}}$. Let, for means of simplicity, also $\delta= 10^{-10^{24}}$. As before, $\sqrt\varepsilon= 10^{-10^{23}}$. For the first direction, reconsider theorem 4.1 with $k=1/\sqrt{\varepsilon}$.  In that case, the equilibrium region is of measure $\mu(\Gamma_{Eq})=1-\varepsilon$ and we obtain from theorem 4.1 that, on a set $X$ of measure close to one, the time average is $\hat\Gamma_{Eq}(x)\ge1-\sqrt{\varepsilon}$. For $x\in X$ and the given value of $\varepsilon$, one obtains:
\[ \mu(\Gamma_{Eq})= 1-10^{-10^{24}} \hspace{1cm} \Rightarrow \hspace{1cm}\hat{\Gamma}_{Eq}(x)\ge 1-10^{-10^{23}}.\]
Hence, the two averages differ at most by the order of $\sqrt{\varepsilon}= 10^{-10^{23}}$. What regards the converse statement, we obtain from proposition 4.1 that if, on a set $X$ of measure close to one, the time average of a certain region is $\hat\Gamma_{M}(x)\ge 1-\varepsilon$, that region is of measure $\mu(\Gamma_{M})\ge (1-\varepsilon)^2> 1-2 \varepsilon$. Hence, for $x\in X$ and the given values of $\delta$ and $\varepsilon$, it is conversely:
\[ \mu(\Gamma_{M})> 1-2\cdot 10^{-10^{24}} \hspace{1cm} \Leftarrow \hspace{1cm}\hat\Gamma_{M}(x)\ge 1- 10^{-10^{24}} .\] 
In this case, the two averages differ at most by the order of ${\varepsilon}=10^{-10^{24}}$. All in all, the two averages are equal up to the order of $\sqrt\varepsilon= 10^{-10^{23}}$!
%To be precise, if the phase space average of the equilibrium state is close to one, its time average is close to one for almost all trajectories and, the other way round, if the time average of a state is close to one for almost all trajectories, its phase space average is close to one as well. %a set of points of measure larger than $1-\sqrt{\varepsilon}$ determines trajectories that spend a fraction of at least $1-\sqrt{\varepsilon}$ of their time in equilibrium
This has important implications. 

First, it shows that one should not highlight the difference between Boltzmann's definition of equilibrium, which refers to the phase space average, and a definition with respect to the time average, like the one Frigg and Werndl ([2015a], [2015b]) propose. Both definitions are equivalent in the sense that, if a state is an equilibrium state with respect to the phase space average, it is an equilibrium state with respect to the time average and vice versa. %Both definitions are equivalent in the sense that, for a realistic physical system, the time and phase space average approximately coincide.

Second, we obtain from the `almost equality' of the two averages that the system behaves essentially as if it were ergodic. While, for an ergodic system, the time and phase space average of the equilibrium state exactly coincide, the approximate equality of the two averages implies qualitatively the same behaviour: given a region of overwhelming measure, almost all trajectories spend almost all of their time in that region; the other way round, given a region in which almost all trajectories spend almost all of their time, that region is of overwhelming measure.

As a consequence, there is simply no need to refer to ergodicity in discussions on the foundations of statistical mechanics, as it has often been done at the time when ergodic theory was at its height (see the references in (Bricmont [1995])), but also recently (see, for instance, (Frigg and Werndl [2011], [2012a])). The preceding result predicts qualitatively the same behaviour, what we call essentially ergodic behaviour, for any system with a stationary measure and a dominant equilibrium state.

Third, the given result justifies Boltzmann's assumption of ergodicity in his computation of the recurrence times (see, for instance, Boltzmann's letter to Zermelo ([1896b])). In that letter, Boltzmann uses ergodicity to calculate the length of the Poincar\'e cycles, respectively the recurrence times, i.e., the time a system needs to wander around phase space before it revisits a tiny region it has started from. Since any system with a stationary measure and a dominant equilibrium state behaves essentially ergodically, Boltzmann's assumption of ergodicity is a legitimate tool in such approximate computations.

Recall that ergodicity (or any related concept, like the essential ergodicity we deal with here) is relevant to the behaviour of the system only on time scales connected to the Poincar\'e cycles, when the trajectory has been winding around all of  phase space covering it more or less densely. This is the time scale at which ergodicity begins to matter. Consequently, the given result doesn't explain the approach from non-equilibrium to equilibrium (which happens within short times), but it provides an estimate on the rate of fluctuations, i.e., the time a system spends in equilibrium until it revisits a tiny non-equilibrium region. 

What then is a good estimate of the rate of fluctuations? To see this, reconsider the theorem on the time average of the Boltzmann equilibrium and, particularly, the set $X$ of trajectories which, if the equilibrium region is of measure $1-10^{-10^{24}}$, spend at least $1-10^{-10^{23}}$ of their time in equilibrium. Recall that the set $X$ if of measure $\mu(X)> 1-10^{-10^{23}}$, that is, it consists of almost all trajectories. Since these trajectories spend at least $1-10^{-10^{23}}$ of their time in equilibrium, they spend less than $10^{-10^{23}}$ of their time textit{out of} equilibrium, that is, in a fluctuation. 
This gives us an idea of the long-time behaviour of the system. If we assume that fluctuations happen occasionally, in accordance with a trajectory wandering around phase space erratically, then we can estimate the approximate rate of fluctuations, for a typical trajectory, as follows: a fluctuation of $1$ second occurs about every $10^{10^{23}}$ seconds, that is, about every $ 10^{10^{23}}$ years. But this means that the system spends trillions of years in equilibrium as compared to one second in non-equilibrium, a time larger than the age of the universe!

\newpage

\Large \noindent \textbf{References}
\normalsize
\vspace{0.8cm}

\noindent Birkhoff, G. D. [1931]: `Proof of the Ergodic Theorem', \textit{Proceedings of the National Academy of Sciences of the United States of America}, \textbf{17}, pp. 65--60.\vspace{0.5cm}

\noindent Boltzmann, L. [1896a]: \textit{Vorlesungen \"uber Gastheorie}, Leipzig: Verlag v. J. A. Barth. Nabu Public Domain Reprints.\vspace{0.5cm}

\noindent Boltzmann, L. [1896b]: `Entgegnung auf die w\"armetheoretischen Betrachtungen des Hrn. E. Zermelo', \textit{Wiedemann's Annalen}, \textbf{57}, pp. 773--84.\vspace{0.5cm}

\noindent  Boltzmann, L. [1897]: `Zu Hrn. Zermelos Abhandlung ``\"Uber die mechanische Erkl\"arung irreversibler Vorg\"ange''', \textit{Wiedemann's Annalen}, \textbf{60}, pp. 392--8. \vspace{0.5cm}

\noindent  Bricmont, J. [1995]: `Science of Chaos or Chaos in Science?', \textit{Physicalia Magazine}, \textbf{17}, pp. 159--208. \vspace{0.5cm}

\noindent Callen, H. B. [1960]: \textit{Thermodynamics and an Introduction to Thermostatics}, New York: Wiley.\vspace{0.5cm} 

\noindent  D\"urr, D., Fr\"omel, A., and Kolb, M. [2017]: \textit{Einf\"uhrung in die Wahrscheinlichkeitstheorie als Theorie der Typizit\"at}, Berlin: Springer. \vspace{0.5cm}

\noindent Frigg, R. and Werndl, C. [2011]: `Explaining Thermodynamic-Like Behavior in Terms of Epsilon-Ergodicity', \textit{Philosophy of Science}, \textbf{78}, pp. 628--52.  \vspace{0.5cm}

\noindent Frigg, R. and Werndl, C. [2012a]: `A New Approach to the Approach to Equilibrium', in Y. Ben-Menahem and M. Hemmo (eds), \textit{Probability in Physics}, Berlin: Springer, pp. 99--113.\vspace{0.5cm}

\noindent Frigg, R. and Werndl, C. [2012b]: `Demystifying Typicality', \textit{Philosophy of Science}, \textbf{79}, 917--929. \vspace{0.5cm}

\noindent Frigg, R. and Werndl, C. [2015a.] `Reconceptualising Equilibrium in Boltzmannian Statistical Mechanics and Characterising its Existence', \textit{Studies in History and Philosophy of Modern Physics}, \textbf{49:1}, pp. 19--31. \vspace{0.5cm}

\noindent  Frigg, R. and Werndl, C. [2015b]: `Rethinking Boltzmannian Equilibrium',  \textit{Philosophy of Science} \textbf{82:5}, pp. 1224--35. \vspace{0.5cm} 

\noindent Frigg, R. and Werndl, C. [2017]: `Mind the Gap: Boltzmannian versus Gibbsian Equilibrium',  \textit{Philosophy of Science}, \textbf{84:5}, pp. 1289--1302. \vspace{0.5cm} 

\noindent Feynman, R. [1965]: \textit{The Character of Physical Law}, Cambridge: M.I.T. Press.\vspace{0.5cm} 

\noindent Goldstein, S. [2001]: `Boltzmann's Approach to Statistical Mechanics', in J. Bricmont, D. D\"urr, M. C. Galavotti, G. Ghirardi, F. Petruccione, and N. Zanghi (eds), \textit{Chance in Physics. Foundations and Perspectives}, Berlin: Springer, pp. 39--54.  \vspace{0.5cm} 

\noindent  Lavis, D. [2005]: `Boltzmann and Gibbs: An Attempted Reconciliation', \textit{Studies in History and Philosophy of Modern Physics}, \textbf{36}, pp. 245--273.\vspace{0.5cm} 

\noindent  Lavis, D. [2008]: `Boltzmann, Gibbs, and the Concept of Equilibrium', \textit{Philosophy of Science}, \textbf{75:5}, pp. 682--96.\vspace{0.5cm} 

\noindent  Lavis, D. [2011]: `An Objectivist Account of Probabilities in Physics', in C. Beisbart and S. Hartmann (eds), \textit{Probabilities in Physics},  New York: Oxford University Press, pp. 51--82.\vspace{0.5cm} 

\noindent  Lazarovici, D. and Reichert, P. [2015]: `Typicality, Irreversibility and the Status of Macroscopic Laws', \textit{Erkenntnis}, \textbf{80:4}, pp. 689--716. \vspace{0.5cm} 

\noindent  Lazarovici, D. [2018]: `On Boltzmann versus Gibbs and the Equilibrium in Statistical Mechanics', \textit{Philosophy of Science}, \textbf{86:4}, pp. 785--793. \vspace{0.5cm} 

\noindent Lebowitz, J. [1993]: `Macroscopic Laws, Microscopic Dynamics, Time's Arrow and Boltzmann's Entropy', \textit{Physica A}, \textbf{194}, pp. 1--27. \vspace{0.5cm} 

\noindent  Penrose, R. [1989]: \textit{The Emperor's New Mind}, Oxford: Oxford University Press. \vspace{0.5cm} 

\noindent  Penrose, R. [2004]: \textit{The Road to Reality}, London: Vintage. \vspace{0.5cm} 

\noindent  Petersen, K. [1983]: \textit{Ergodic Theory. Cambridge Studies in Advanced Mathematics 2}, Cambridge: Cambridge University Press. \vspace{0.5cm} 

\noindent Reiss, H. [1996]: \textit{Methods of Thermodynamics}, Minneaola/New York: Dover.\vspace{0.5cm} 

\noindent Vranas, P. [1998]: `Epsilon-Ergodicity and the Success of Equilibrium Statistical Mechanics', \textit{Philosophy of Science}, \textbf{65: 4}, pp. 688--708.\vspace{0.5cm} 

\noindent Wilhelm, Isaac. [2019]: `Typical: A Theory of Typicality and Typicality Explanations', \textit{The British Journal for the Philosophy of Science}, axz016, available at \\
<doi.org/10.1093/bjps/axz016>.

%\noindent  Uffink, Jos. 2004 (revised 2017). ``Boltzmann's Work in Statistical Physics.'' In E. N. Zalta (Ed.), \textit{The Stanford Encyclopedia of Philosophy}, URL = <https://plato.stanford.edu/ \newline archives/spr2017/entries/statphys-Boltzmann/>. \vspace{0.5cm} 

%\noindent  ---. 2007. Compendium of the Foundations of Classical Statistical Physics. In J. Butterfield, and J. Earman (Eds.), \textit{Handbook for the Philosophy of Physics}, 923-1047. Amsterdam: Elsevier.

\end{document}